\newtheorem{proposition}{Proposition}
\newtheorem{definition}{Definition}
\newcommand{\printtitle}{%
\makeatletter
\if@twocolumn

\twocolumn[%
  \maketitle
  \begin{onecolabstract}
    \myabstract
  \end{onecolabstract}
  \begin{center}
    \small
    \textbf{Keywords}
    \\\medskip
    \mykeywords
  \end{center}
  \bigskip
]
\saythanks
\else
  \maketitle
  \begin{onecolabstract}
    \myabstract
  \end{onecolabstract}
  \begin{center}
    \small
    \textbf{Keywords}
    \\\medskip
    \mykeywords
  \end{center}
  \bigskip
  \onehalfspacing
\fi
\makeatother
}
\title{%
Multidimensional Wavelets for Scalable Image Decomposition:
Orbital Wavelets}
\author{%
H. M. de~Oliveira%
\thanks{%
Departamento de Estat\'{\i}stica,
Universidade Federal de Pernambuco, Recife, PE, Brazil.
E-mail: hmo@de.ufpe.br}
\and
V. V. Vermehren%
\thanks{%
Departamento de Engenharia El\'etrica,
Universidade do Estado do Amazonas, Manaus, AM,  Brazil}
\and
R. J. Cintra%
\thanks{%
Departamento de Estat\'{\i}stica,
Universidade Federal de Pernambuco, Recife, PE, Brazil.}
}
\date{}
\newcommand{\myabstract}{%
Wavelets are closely related to the Schr\"odinger's wave functions
and the interpretation of Born. Similarly to the appearance of atomic
orbital, it is proposed to combine anti-symmetric wavelets into orbital
wavelets.
The proposed approach allows the increase of the dimension of
wavelets through this process.
New orbital 2D-wavelets are introduced
for the decomposition of still images, showing that it is possible
to perform an analysis simultaneous in two distinct scales. An example
of such an image analysis is shown.
}
\newcommand{\mykeywords}{%
2D wavelets;
anti-symmetric wavelets;
orthogonal wavelets;
image analysis.
}
\begin{document}

\printtitle

\section{Introduction}
\label{sec:introduction}

Wavelet transform methods are important tools in image processing due to their capabilities for
multiresolution analysis and image decomposition~\cite{Mallat99,Burrus98}.
Wavelet-based image processing
find application
in computer graphics~\cite{Schroder96}, including radiosity, global
illumination~\cite{Stollnitz96}, and real volume data~\cite{Muraki93},
and volume rendering~\cite{Roerdink99};
being
routinely
considered as an approach
for texture image decomposition,
image coding,
subband coding,
fast image segmentation~\cite{Aujol2006,Antonini92,Vetterli90,Lim90,Kim2003},
and
3D signal processing~\cite{Taubman94}.
In particular,
wavelet transform coding~\cite{Ohm2004,Bottreau2001,Usevitch2001,DeVore95} has emerged as a
practical and fully-established tool~\cite{Mallat99,Aujol2006}
which
benefits image compressing methods,
such as
the {JPEG} 2000 standard~\cite{Skodras2001,Usevitch2001,Saha2000,Hilton94},
and
multimedia schemes on Internet,
such as
data streaming over {IP}~\cite{Radha2001,Baganne2003}
and
image querying~\cite{Jacobs95}.
Scalable coding for image, audio, and video
largely
adopts wavelet-based methods
as demonstrated in the MPEG-4 codec~\cite{Radha2001,Ohm2004,Bottreau2001,Shaar2003,Sodagar2002}.

Multiwavelets
have been explored
for the assessment of order/disorder
in reconstructed biomedical images~\cite{Zemni2019}.
Further connections between information-theoretical entropy
and
wavelet analysis
were explored in the context of geoscience~\cite{nicolis20152d}.
Recent advances in the field
of image decomposition
include
the proposition
of special filters for
spherical harmonics modeling~\cite{Jallouli2019b}
capable
of
multi-level decomposition
suitable for 3D images
with the introduction
of the concept of spherical harmonics entropy~\cite{Jallouli2019a}.

In the same vein of exploring connections between different research fields,
Ashmead reported~\cite{Ash2012}
a link between
quantum mechanics and wavelets.
Despite the potentially deep mathematical meaning of such link,
it has not been significantly explored in
the
context of wavelet decomposition and image analysis.
Indeed,
the wave nature of light can be deduced from the phenomenon of interference, the photoelectric effect, however, it seems to suggest a corpuscular nature of light.
Theoretical physicists struggled to include observations like the photoelectric effect and the wave-particle duality into their formulations~\cite{Sully2012}.
Erwin Schr\"odinger
employed advanced mechanics to address
such phenomena
and developed an equation that relates the space-time in quantum mechanics. %
Because wavelets are localized in both time and frequency they offer significant advantages for the
analysis of problems in quantum mechanics.

In this paper,
we aim at proposing
an alternative wavelet decomposition scheme.
For such,
we explore
the above discussed link between
wavelets and quantum mechanics
and
shed some light on some of these relations.
Rather than seeking at wavelet features on particles or waves,
we adapted some concepts of quantum mechanics to a novel wavelet decomposition for still images.

The paper is organized as follows.
Section~\ref{sec:motivation}
details the main ideas behind the proposed wavelet system
which
is based on
an interpretation
from particle physics and quantum states~\cite{Eisberg2007,Beiser94}.
We describe the construction of the proposed orbital wavelets
for the image decomposition.
It is shown that the introduced derivation
is naturally
suitable for generating two-dimensional wavelets.
In Section~\ref{sec:implementing}
we submit standard imagery to the proposed wavelet decomposition
scheme
and compare
the results with
standard wavelet decomposition..
Section~\ref{sec:concluding}
concludes the paper.

\section{Orbital Wavelet Decomposition}
\label{sec:motivation}

\subsection{Particle Systems}

The wave functions describing electronic orbitals
can be combined generating
atomic orbitals.
Let us consider
a
two-particle non-interaction systems
with
particles
located at position vectors
$\mathbf{r}_1$
and
$\mathbf{r}_2$,
respectively,
where
each particle
is equipped
with wave functions
$\psi_\alpha(\mathbf{r}_1)$
and
$\psi_\beta(\mathbf{r}_2)$
at states
$\alpha$ and $\beta$,
respectively.
The wave function
that characterizes the orbital interaction of the two particles
is furnished by a combination
$\psi_\alpha(\mathbf{r}_1)$
and
$\psi_\beta(\mathbf{r}_2)$
in two different configurations:
symmetric ($S$)
and
anti-symmetric ($A$)
combinations~\cite{Duck98}.
Such combinations are given by:
\begin{align}
\psi_{S}(\mathbf{r}_1,\mathbf{r}_2)
&
=
\frac{1}{\sqrt{2}}
\Big[
\psi_\alpha(\mathbf{r}_1)
\cdot
\psi_\beta(\mathbf{r}_2)
+
\psi_\alpha(\mathbf{r}_2)
\cdot
\psi_\beta(\mathbf{r}_1)
\Big]
,
\label{eq:4}
\\
\psi_{A}(\mathbf{r}_1,\mathbf{r}_2)
&
=
\frac{1}{\sqrt{2}}
\Big[
\psi_\alpha(\mathbf{r}_1)
\psi_\beta(\mathbf{r}_2)
-
\psi_\alpha(\mathbf{r}_2)
\psi_\beta(\mathbf{r}_1)
\Big]
\label{eq:5}
,
\end{align}
respectively.
The anti-symmetric case
can be conveniently
written
in matrix format
as
$
\psi_{A}(\mathbf{r}_1,\mathbf{r}_2)
=
\frac{1}{\sqrt2}
\operatorname{det}
\left[
\begin{smallmatrix}
\psi_{\alpha}(\mathbf{r}_1) & \psi_{\alpha}(\mathbf{r}_2)\\
\psi_{\beta}(\mathbf{r}_1) & \psi_{\beta}(\mathbf{r}_2)
\end{smallmatrix}
\right]
.
$

A
comparable concept
in the scope of wavelets, also characterized by wave functions,
would
be the combination of different spatial wavelets~\cite{DeO}.

\subsection{Orbital Wavelets}

\subsubsection{Symmetric Case: Standard Wavelets}

Usual wavelet image analysis combines one-dimensional (1D) wavelets to generate
a two-dimensional (2D) wavelet~\cite{Antonini92,Vetterli90}.
This can be done by considering a scaling function $\varphi$ and
a wavelet function $\psi$,
a version for each dimension, abscissa and ordinate~\cite{Burrus98}.
Thus,
we have a 2D scale function
$\varphi_\text{LL}(x,y) = \varphi(x)\cdot\varphi(y)$
and three
2D wavelet functions:
\begin{align}
\label{equation-standard-2d-wavelets}
\begin{split}
\psi_\text{LH}(x,y) &= \varphi(x)\cdot\psi(y)
,
\\
\psi_\text{HL}(x,y) &= \psi(x)\cdot\varphi(y)
,
\\
\psi_\text{HH}(x,y) &= \psi(x)\cdot\psi(y)
.
\end{split}
\end{align}

The wavelets
$\psi_\text{LH}(x,y)$
and
$\psi_\text{HL}(x,y)$
naturally exhibit
reflection symmetry
with respect
to the plane
$x=y$,
i.e.
$
\psi_\text{LH}(x,y)
=
\psi_\text{LH}(y,x)
$
and
$
\psi_\text{HL}(x,y)
=
\psi_\text{HL}(y,x)
$.
Such reflection symmetry
is analogous
to
the symmetric wave function
described in~\eqref{eq:4}.

\subsubsection{Anti-symmetric Case: Orbital Wavelets}

Following this analogy,
considering a single orthogonal wavelet system,
we are compelled
to pursue the definition of wavelets
that
could be regarded as the counterparts
of the
anti-symmetric wave function described in~\eqref{eq:5}.
Thus
the combination of $\varphi(\cdot)$ and $\psi(\cdot)$
should
be arranged to provide anti-symmetry,
i.e.
the
$\hat{\psi}_\text{LH}(x,y)$
and
$\hat{\psi}_\text{HL}(x,y)$
wavelets
should be such that:
\begin{align}
\label{eq:2}
\hat{\psi}_\text{LH}(x,y)
&
=
-
\hat{\psi}_\text{LH}(x,y)
\\
\hat{\psi}_\text{HL}(x,y)
&
=
-
\hat{\psi}_\text{HL}(x,y)
\end{align}
A solution to the above requirement
is
to define an orbital-inspired combination
(cf.~\eqref{eq:5})
of $\varphi$ and $\psi$
according to the following:
\begin{align}
\label{equation-orbital}
\hat{\psi}_\text{LH}(x,y)
\triangleq
\frac{1}{\sqrt{2}}
\Big[
\varphi(x)\cdot\psi(y)-\psi(x)\cdot\varphi(y)
\Big]
.
\end{align}
In order
to illustrate the effect
of this definition
we consider
the case of the Meyer orthogonal wavelet~\cite{Meyer90}.
Figure~\ref{fig:meyer} shows
the surface plots for
the standard  of the discussed functions.

\begin{figure}
\centering

\subfigure[$\psi_\text{LH}(x,y)$]
{\includegraphics[width=0.48\columnwidth]{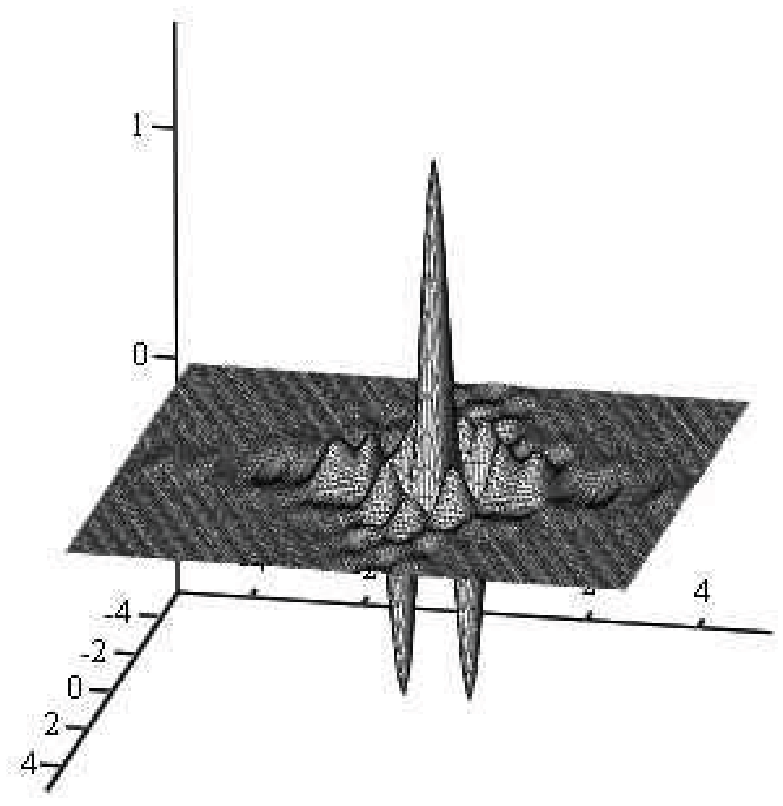}}
\subfigure[$\hat{\psi}_\text{LH}(x,y)$]
{\includegraphics[width=0.48\columnwidth]{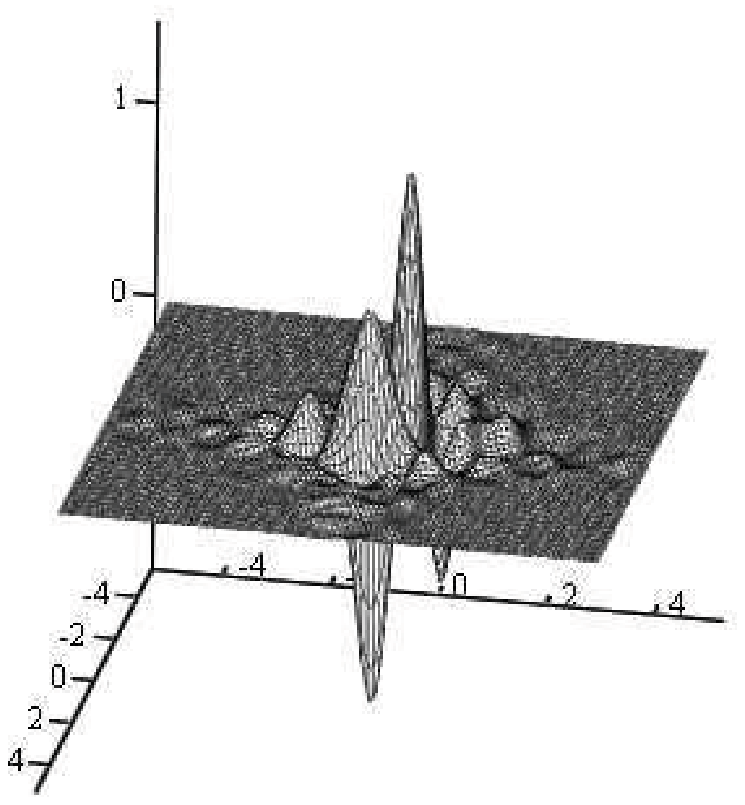}}

\vspace*{8pt}

\caption{2D-Meyer decomposing functions of the secondary diagonal:
(a)~$\psi_\text{LH}(x,y)$;
(b)~$\hat{\psi}_\text{LH}(x,y)$.}

\label{fig:meyer}
\end{figure}

Analogously,
we define
the
wavelets
related to the LL and HH
decompositions
according
to the following
expressions:
\begin{align}
\label{equation-orbital-wavelet-LL}
\hat{\varphi}_\text{LL}(x,y)
&
\triangleq
\frac{1}{\sqrt{2}}
\Big[
\varphi^{\ast}(x)
\cdot
\varphi(y)
+
\varphi^{\ast}(y)
\cdot
\varphi(x)
\Big]
,
\\
\label{equation-orbital-wavelet-HH}
\hat{\psi}_\text{HH}(x,y)
&
\triangleq
\frac{1}{\sqrt{2}}
\Big[
\psi^{\ast}(x)
\cdot
\psi(y)
-
\psi^{\ast}(y)
\cdot
\psi(x)
\Big]
.
\end{align}
The above definitions allows
the analysis of images using continuous
complex wavelets~\cite{Kingsbury99}.
For real-valued
wavelets,
the above expressions
collapse to the usual forms
$\varphi(x) \cdot \varphi(y)$
or
$\psi(x)\cdot\psi(y)$
present in standard wavelet analysis.
Thus in the real case,
although
the proposed wavelet
$\hat{\varphi}_\text{LL}(x,y)$
in \eqref{equation-orbital-wavelet-LL}
is well-defined,
the
wavelet
$\hat{\psi}_\text{HH}(x,y)$
would collapse to zero.

An approach to address such degeneracy
is to consider
daughter wavelets at different
scales.
Therefore
let us consider the 1{D} orthogonal~\cite{Lawton91,
Maab96} wavelet mother $\psi(x)$ equipped
equipped
with her daughter wavelets
$\{\psi_{a,b}(x)\}_{a\neq0,b\in R}$.
The formalism shown in~\eqref{equation-orbital-wavelet-HH}
can be extended
by considering
the inclusion
of
two wavelets $\psi_{a_1,b}(\cdot)$ and $\psi_{a_2,b}(\cdot)$
resulting in the following definition.

\begin{definition}
\label{definition-2d-orbital}
The function \mbox{2D}-orbital at the scales $\{a_1,a_2\}$
is defined by:
\begin{align}
\hat{\psi}_\text{HH}(x,y)\triangleq\frac{1}{\sqrt{2}}\operatorname{det}\begin{bmatrix}\psi_{a_1,b}^{*}(x) & \psi_{a_1,b}^{*}(y)\\
\psi_{a_2,b}(x) & \psi_{a_2,b}(y)
\end{bmatrix},
\end{align}

which can be rewritten as:
\begin{align}
\hat{\psi}_\text{HH}(x,y)
=
&
\frac{1}{\sqrt{2|a_1||a_2|}}
\psi^{\ast}\left(\frac{x-b}{a_1}\right)
\cdot
\psi\left(\frac{y-b}{a_2}\right)
\\
&
-
\frac{1}{\sqrt{2|a_1||a_2|}}
\psi\left(\frac{x-b}{a_2}\right)
\cdot
\psi^{\ast}\left(\frac{y-b}{a_1}\right)
.
\end{align}
\end{definition}

The condition $a_1\neq a_2$ eliminates the degeneration
$\hat{\psi}_\text{HH}(x,y)=0$.
This is to some extent in connection to the
Pauli Exclusion Principle~\cite{Duck98},
which
states that with a single-valued many-particle wave function is equivalent to requiring the wave function to be antisymmetric.
An antisymmetric two-particle state is represented as a sum of states in which one particle is in state $\alpha$ and the other in state $\beta$. Besides, the relationship $\hat{\psi}_\text{HH}(y,x)=-\hat{\psi}_\text{HH}(x,y)$ ensures the desired asymmetry.
Here, we use the same wavelet-mother,
but at different scales.
The 2D decomposition stated in
Definition~\ref{definition-2d-orbital} results in
a strict 2D wavelet.

\subsection{Mathematical Properties}

\begin{proposition}
The previously defined 2D-orbital function has oscillatory behavior
satisfying the following properties:
\begin{enumerate}[(i)]
\item
$\intop_{-\infty}^{\infty}\hat{\psi}_\text{HH}(x,y)\operatorname{d}\!x=0$,
\item
$\intop_{-\infty}^{\infty}\hat{\psi}_\text{HH}(x,y)\operatorname{d}\!y=0$,
\item
$\intop_{-\infty}^{\infty}\intop_{-\infty}^{\infty}\hat{\psi}_\text{HH}(x,y)\operatorname{d}\!x\operatorname{d}\!y=0$.
\end{enumerate}

\end{proposition}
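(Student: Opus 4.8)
The plan is to reduce all three statements to the single defining feature of a wavelet mother, namely the zero-mean (admissibility) condition $\int_{-\infty}^{\infty}\psi(t)\,\mathrm{d}t=0$, which also holds for $\psi^{\ast}$ since complex conjugation commutes with integration, and which makes sense because a wavelet is assumed to lie in $L^1(\mathbb{R})$.

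For~(i) I would start from the explicit two-term expansion of $\hat{\psi}_\text{HH}(x,y)$ given in Definition~\ref{definition-2d-orbital} and integrate in $x$ with $y$ held fixed. Each term is a product of a function of $x$ and a function of $y$, so the $y$-dependent factor pulls out of the $x$-integral, leaving integrals of the form $\int_{-\infty}^{\infty}\psi\!\big((x-b)/a\big)\,\mathrm{d}x$. The substitution $u=(x-b)/a$ turns each of these into $|a|\int_{-\infty}^{\infty}\psi(u)\,\mathrm{d}u$, the absolute value absorbing the orientation reversal that occurs when $a<0$; this vanishes by admissibility. Hence both terms of $\hat{\psi}_\text{HH}$ integrate to zero in $x$, which is~(i).

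Statement~(ii) follows by the identical computation with the roles of $x$ and $y$ exchanged; alternatively, it is immediate from the antisymmetry $\hat{\psi}_\text{HH}(y,x)=-\hat{\psi}_\text{HH}(x,y)$ recorded just after Definition~\ref{definition-2d-orbital}, combined with~(i). Statement~(iii) then requires no new work: by~(i) the inner integral $\int_{-\infty}^{\infty}\hat{\psi}_\text{HH}(x,y)\,\mathrm{d}x$ equals $0$ for every $y$, so integrating that identically-zero function in $y$ gives $0$. Interchanging the order of integration, if one prefers to integrate in $y$ first, is justified by Fubini since $\psi\in L^1(\mathbb{R})$ forces $\hat{\psi}_\text{HH}\in L^1(\mathbb{R}^2)$.

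I do not expect a genuine obstacle here — the argument is essentially one change of variables plus the vanishing-integral property of $\psi$. The only point deserving a moment's care is the sign bookkeeping in the substitution when the scale parameters $a_1,a_2$ are negative, and this is handled transparently by the $1/\sqrt{|a_1||a_2|}$ normalization already built into the definition.
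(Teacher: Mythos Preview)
Your proposal is correct and follows essentially the same route as the paper: factor each term of $\hat{\psi}_\text{HH}$, pull the $y$-dependent (respectively $x$-dependent) factor outside the integral, and invoke the zero-mean property of the daughter wavelets, with Fubini handling~(iii). Your explicit change of variables and the remark on $L^1$-integrability add a bit of rigor, and the antisymmetry shortcut for~(ii) is a nice alternative not mentioned in the paper, but the underlying argument is the same.
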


\begin{proof}It follows that

\begin{align}
\intop_{-\infty}^{\infty}\hat{\psi}_\text{HH}(x,y)\operatorname{d}\!x=\frac{1}{\sqrt{2}}\psi_{a_2,b}(y)\cdot\overline{\psi_{a_1,b}^{\ast}(x)}
-\frac{1}{\sqrt{2}}\overline{\psi_{a_2,b}(x)}\cdot\psi_{a_1,b}^{\ast}(y),
\end{align}
where
\begin{align}
\overline{\psi_{a,b}(x)}
\triangleq
\intop_{-\infty}^{\infty}
\psi_{a,b}(x)\operatorname{d}\!x
.
\end{align}
Therefore,
the property~(i) derives from the fact that $\psi_{a,b}(x)$,
$a=\{a_1,a_2\}$ be individual wavelets.
The demonstration for property~(ii)
is similar, considering that
\begin{align}
\intop_{-\infty}^{\infty}\hat{\psi}_\text{HH}(x,y)\operatorname{d}\!y
=
\frac{1}{\sqrt{2}}
\overline{\psi_{a_2,b}(y)}
\cdot
\psi_{a_1,b}^{\ast}(x)
-
\frac{1}{\sqrt{2}}
\psi_{a_2,b}(x)
\cdot
\overline{\psi_{a_1,b}^{\ast}(y)}.
\end{align}
The condition
$\intop_{-\infty}^{\infty}
\intop_{-\infty}^{\infty}
\hat{\psi}_\text{HH}(x,y)\operatorname{d}\!x\operatorname{d}\!y=0$
follows from Fubini's theorem~\cite{DeFig2000},
regardless of the order of integration.
\end{proof}

Hereafter
we assume
an orthogonal wavelet system
with unitary energy.
In other words,
the following conditions hold true:
\begin{enumerate}[(i)]

\item
the inner product
$\langle\psi_{a_1,b},\psi_{a_2,b}\rangle=0$,
i.e., the following integrals cancel out $\forall a_1\neq a_2$:
\begin{align}
\intop_{-\infty}^{\infty}
\psi_{a_1,b}(x)\cdot\psi_{a_2,b}^{*}(x)
\operatorname{d}\!x
=
\intop_{-\infty}^{\infty}
\psi_{a_1,b}^{*}(x)\cdot\psi_{a_2,b}(x)
\operatorname{d}\!x=0
,
\end{align}
and
\item
$\int_{-\infty}^\infty | \psi_{a_1,b}(x) |^2 \operatorname{d}\!x = 1$.

\end{enumerate}
It is also noteworthy that
\begin{align}
\langle\psi_{a_1,b},\psi_{a_2,b}\rangle^{*}
=
\langle\psi_{a_2,b},\psi_{a_1,b}\rangle
.
\end{align}

\begin{proposition}
The 2D-orbital functions have normalized energy.
\end{proposition}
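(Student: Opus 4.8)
The plan is to evaluate $\displaystyle\iint_{\mathbb{R}^2}|\hat{\psi}_\text{HH}(x,y)|^2\,\operatorname{d}\!x\operatorname{d}\!y$ directly from the determinant form of Definition~\ref{definition-2d-orbital} and to verify it equals $1$; the computation for $\hat{\psi}_\text{LH}$ of~\eqref{equation-orbital} is identical, with the scaling function playing the role of one of the wavelets and the relation $\langle\varphi,\psi\rangle=0$ of the orthogonal MRA standing in for hypothesis~(i). Writing $f=\psi_{a_1,b}$ and $g=\psi_{a_2,b}$ for brevity, we have $\hat{\psi}_\text{HH}(x,y)=\tfrac1{\sqrt2}\bigl[f^{*}(x)g(y)-f^{*}(y)g(x)\bigr]$, so expanding the squared modulus of this difference produces four terms: two ``diagonal'' terms, $|f(x)|^2|g(y)|^2$ and $|g(x)|^2|f(y)|^2$, and two mutually conjugate ``cross'' terms.

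First I would dispatch the diagonal terms. Each one separates into a factor depending on $x$ alone times a factor depending on $y$ alone, so --- invoking Fubini's theorem exactly as in the previous proposition --- its double integral splits as $\bigl(\int|f|^2\bigr)\bigl(\int|g|^2\bigr)$, respectively $\bigl(\int|g|^2\bigr)\bigl(\int|f|^2\bigr)$. By the unit-energy hypothesis~(ii), each of these 1D integrals equals $1$, so the diagonal part contributes $\tfrac12(1+1)=1$.

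Next come the cross terms. After separating variables, each is a product of two 1D integrals in which $\psi_{a_1,b}$ is paired against $\psi_{a_2,b}$; since Definition~\ref{definition-2d-orbital} imposes $a_1\neq a_2$, the orthogonality hypothesis~(i) forces these pairings to vanish, so the cross part contributes $0$ and the double integral equals $1$, which is the asserted normalization. Equivalently, one may package the whole calculation as the observation that $\iint|\hat{\psi}_\text{HH}|^2$ is a $2\times2$ Gram-type determinant $\|\psi_{a_1,b}\|^2\,\|\psi_{a_2,b}\|^2-|\langle\psi_{a_1,b},\psi_{a_2,b}\rangle|^2$, which equals $1\cdot1-0$ for an orthonormal system.

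The only step that is not pure bookkeeping or an appeal to Fubini is the vanishing of the cross terms, and that is where some care with the complex conjugates is needed: one must confirm that each cross integral really is one of the pairings that hypothesis~(i) lists as zero --- it records both $\int\psi_{a_1,b}\psi_{a_2,b}^{*}$ and $\int\psi_{a_1,b}^{*}\psi_{a_2,b}$ --- rather than the unconjugated bilinear pairing $\int\psi_{a_1,b}\psi_{a_2,b}$. For real-valued wavelets this is automatic, and for genuinely complex (e.g.\ analytic) wavelets the same conclusion follows; once it is in hand, the $\tfrac12$ produced by squaring the $\tfrac1{\sqrt2}$ prefactor is exactly what turns the ``$1+1$'' from the diagonal terms into unit energy.
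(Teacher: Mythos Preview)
Your argument is correct and follows essentially the same route as the paper's own proof: expand $|\hat{\psi}_\text{HH}|^2$ into two diagonal terms and two mutually conjugate cross terms, kill the cross terms via the orthogonality hypothesis~(i), and reduce the diagonal contribution to $\tfrac12(1+1)=1$ using the unit-energy hypothesis~(ii). Your Gram-determinant repackaging and your explicit attention to which conjugate pairing actually appears in the cross terms are refinements beyond what the paper records.
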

\begin{proof}
Expanding the expression
$|\hat{\psi}_\text{HH}(x,y)|^2=\hat{\psi}_\text{HH}(x,y)\cdot\hat{\psi}_\text{HH}^{\ast}(x,y)$
yields
cross-product terms
in the following form
\begin{align}
\operatorname{cross}(x,y)
\triangleq
-\psi_{a_1,b}(x)
\cdot
\psi_{a_2,b}^{\ast}(y)
\cdot
\psi_{a_1,b}(y)
\cdot
\psi_{a_2,b}^{^{\ast}}(y)
\end{align}
and its complex conjugate
$\operatorname{cross}^\ast(x,y)$.
Performing the integration with respect
to $x$ and $y$
yields:
\begin{align}
\intop_{-\infty}^{\infty}
\operatorname{cross}(x,y)
\operatorname{d}\!x
=
-2
\cdot
\psi_{a_2,b}^{\ast}(y)\psi_{a_1,b}(y)
\cdot
\intop_{-\infty}^{\infty}
\psi_{a_1,b}^{\ast}(x)\psi_{a_2,b}(x)
\operatorname{d}\!x
,
\end{align}
and
\begin{align}
\intop_{-\infty}^{\infty}
\operatorname{cross}(x,y)
\operatorname{d}\!y
=
-2
\cdot
\psi_{a_2,b}^{\ast}(x)\psi_{a_1,b}(x)
\cdot
\intop_{-\infty}^{\infty}
\psi_{a_1,b}^{\ast}(y)\psi_{a_2,b}(y)\operatorname{d}\!x
.
\end{align}
Invoking the orthogonality condition
we obtain
that
all cross terms are void.
The remaining possibly nonnull terms
are:
\begin{align}
|\hat{\psi}_\text{HH}(x,y)|^2
=
\frac{1}{2}|\psi_{a_1,b}(x)|^2
\cdot
|\psi_{a_2,b}(y)|^2
\cdot
|\psi_{a_1,b}(y)|^2
\cdot
|\psi_{a_2,b}(x)|^2
\end{align}
and therefore,
because of the normalized energy condition,
we have:
\begin{align}
\intop_{-\infty}^{\infty}\intop_{-\infty}^{\infty}|\hat{\psi}_\text{HH}(x,y)|^2\operatorname{d}\!x\operatorname{d}\!y=1
.
\end{align}
\end{proof}

It is possible (more
easily) to combine orthogonal 1D-wavelets and use them to build a
new 2D-wavelet.

\begin{proposition}
The 2D-orbital function is a 2D wavelet.\end{proposition}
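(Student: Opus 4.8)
The plan is to verify that $\hat{\psi}_\text{HH}$ satisfies the defining requirements of a two-dimensional wavelet: membership in $L^2(\mathbb{R}^2)$, the vanishing-mean (oscillation) conditions, and the admissibility condition that makes the associated continuous wavelet transform invertible. The first two requirements are already secured: the preceding proposition on normalized energy gives $\hat{\psi}_\text{HH}\in L^2(\mathbb{R}^2)$ with unit norm, and the proposition on the oscillatory behavior supplies the three zero-integral identities. Hence the only point that genuinely needs to be argued is admissibility.

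First I would pass to the Fourier domain. Since each summand of $\hat{\psi}_\text{HH}(x,y)$ in Definition~\ref{definition-2d-orbital} is a product of a function of $x$ with a function of $y$, the 2D Fourier transform factorizes termwise,
\begin{align}
\widehat{\hat{\psi}_\text{HH}}(\omega_1,\omega_2)
=
\frac{1}{\sqrt{2}}
\Big[
\widehat{\psi_{a_1,b}^{*}}(\omega_1)\,\widehat{\psi_{a_2,b}}(\omega_2)
-
\widehat{\psi_{a_2,b}}(\omega_1)\,\widehat{\psi_{a_1,b}^{*}}(\omega_2)
\Big],
\end{align}
and the dilation–translation structure of $\psi_{a,b}$ lets each factor be written as $\hat{\psi}$ evaluated at a rescaled frequency, up to a unimodular translation phase and a normalizing constant. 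Substituting this into the separable 2D admissibility integral $\iint |\widehat{\hat{\psi}_\text{HH}}(\omega_1,\omega_2)|^2 / (|\omega_1|\,|\omega_2|)\, \mathrm{d}\omega_1\,\mathrm{d}\omega_2$ and expanding the square yields four terms. Two of them, after the change of variables sending $(\omega_1,\omega_2)$ to the rescaled arguments, decouple into products whose factors are exactly the 1D admissibility constant $C_\psi=\int |\hat{\psi}(\omega)|^2/|\omega|\,\mathrm{d}\omega$ of the underlying orthogonal mother wavelet (finite by hypothesis) and the energy $\|\hat{\psi}\|_2^2=1$; the two remaining cross terms, which mix the scales $a_1$ and $a_2$, are bounded by the same quantities after a Cauchy–Schwarz step.

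The step I expect to be the main obstacle is controlling the integrand near the coordinate axes $\omega_1=0$ and $\omega_2=0$: there the weight $1/(|\omega_1||\omega_2|)$ blows up, yet the numerator is not forced to vanish jointly in both variables as it is at the true 2D origin. This is resolved by invoking the vanishing mean of the 1D mother wavelet, $\hat{\psi}(0)=0$, which gives $\widehat{\psi_{a_i,b}}(\omega)=O(|\omega|)$ as $\omega\to 0$ and thus furnishes the extra power of frequency needed for local integrability along each axis; the tails as $|\omega_1|,|\omega_2|\to\infty$ are absorbed by the $L^2$ decay of $\hat{\psi}$. Combining the four bounds produces a finite admissibility constant, so $\hat{\psi}_\text{HH}$ is a genuine 2D wavelet.
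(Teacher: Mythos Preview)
Your approach is essentially the paper's: pass to the Fourier side, use the tensor structure of each summand, and reduce the separable 2D admissibility integral $\iint |\widehat{\hat\psi}_\text{HH}(u,v)|^2/(|u|\,|v|)\,\mathrm{d}u\,\mathrm{d}v$ to the finiteness of the 1D admissibility constant of the mother wavelet~$\psi$. Two small remarks: the diagonal terms in the expansion factor into a product of \emph{two} 1D admissibility integrals (not an admissibility constant times an $L^2$ energy, as you wrote), and the paper dispatches the cross terms by invoking Parseval/orthogonality rather than your Cauchy--Schwarz bound---your last paragraph about integrability near the coordinate axes is unnecessary, since that is already encoded in $C_\psi<\infty$.
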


\begin{proof}
Let $\Psi(\omega)$
and
$\Psi_{a,b}(\omega)$ be the Fourier transforms
of
the wavelet~$\psi(t)$
and
the
daughter wavelets~$\psi_{a,b}(t)$,
respectively.
If
the admissibility condition holds~\cite{Boggess09,Burrus98},
\begin{align}
\intop_{-\infty}^{\infty}
\frac{|\Psi(\omega)|^2}{|\omega|}
\operatorname{d}\!\omega < \infty
\end{align}
then
$\intop_{-\infty}^{\infty}
\frac{|\Psi_{a,b}(\omega)|^2}{|\omega|}
\operatorname{d}\!\omega
<\infty
$,
since
$\Psi_{a,b}(\omega)=\sqrt{|a|}\,\Psi(a\omega)e^{-j \omega b}$~\cite{Mallat99}.
Let us now evaluate the condition for the 2D case. If the Fourier
transform pair
$\hat{\psi}_\text{HH}(x,y)\leftrightarrow\hat{\Psi}_\text{HH}(u,v)$ does exist,
then the 2D spectrum of $\hat{\psi}_\text{HH}~$ can be computed
in terms of the Fourier spectrum
of $\psi$:
\begin{align}
\hat{\Psi}_\text{HH}(u,v)
=
\frac{\sqrt{|a_1a_2|}}{\sqrt{2}}
\Big[
\Psi(a_1u)\Psi^{\ast}(a_2v)-\Psi(a_2u)\Psi^{\ast}(a_1v)
\Big]
.
\end{align}

From the generalized Parseval-Plancherel energy theorem~\cite{DeFig2000,
Burrus98}, the cross-terms vanish due to the orthogonality.
Thus, we have
\begin{align}
|\hat{\Psi}_\text{HH}(u,v)|^2
=
\frac{|a_1a_2|}{2}
|\Psi(a_1u)|^2
\cdot
|\Psi(a_2v)|^2
+
\frac{|a_1a_2|}{2}
|\Psi(a_2u)|^2
\cdot
|\Psi(a_1v)|^2
.
\end{align}
Then,
from the marginal admission conditions of the 1D daughter wavelets,
we
have that
\begin{align}
\int_{-\infty}^{\infty}
\int_{-\infty}^{\infty}
\frac{|\hat{\Psi}_\text{HH}(u,v)|^2}{|u|\cdot|v|}
\operatorname{d}\!u
\operatorname{d}\!v
<
\infty.
\end{align}
\end{proof}

\section{Orbital Wavelet-based  Image Decomposition}
\label{sec:implementing}

The standard 2D wavelet decomposition
effects coefficient matrices representing
vertical,
horizontal,
and diagonal structures
denoted by
sub-images
$\{
L_1L_1,
L_1H_1,
H_1L_1,
H_1H_1
\}$
and
$\{
L_2L_2,
L_2H_2,
H_2L_2,
H_2H_2
\}$
for
the first and second level decompositions~\cite{Strang1997}.
Figure~\ref{table-standard-decomposition}
illustrate the structures.
Noticed that the terms
$L_{1}L_{1}$,
$L_{2}L_{2}$,
$H_{1}H_{1}$,
and
$H_{2}H_{2}$ on the main diagonal correspond to a part of the standard
wavelet decomposition of the image into two levels.
The proposed wavelet analysis
results in a similar structure
with sub-images
$\{
\widehat{L_1L_1},
\widehat{L_1H_1},
\widehat{H_1L_1},
\widehat{H_1H_1}
\}$
and
$\{
\widehat{L_2L_2},
\widehat{L_2H_2},
\widehat{H_2L_2},
\widehat{H_2H_2}
\}$
in a two-level decomposition.
Despite the similarity,
the
sub-images are computed
according to the orbital wavelets.
The resulting scheme is depicted in Figure~\ref{table-orbital-decomposition}.

\begin{figure}
\centering

\subfigure[First level]{\includegraphics{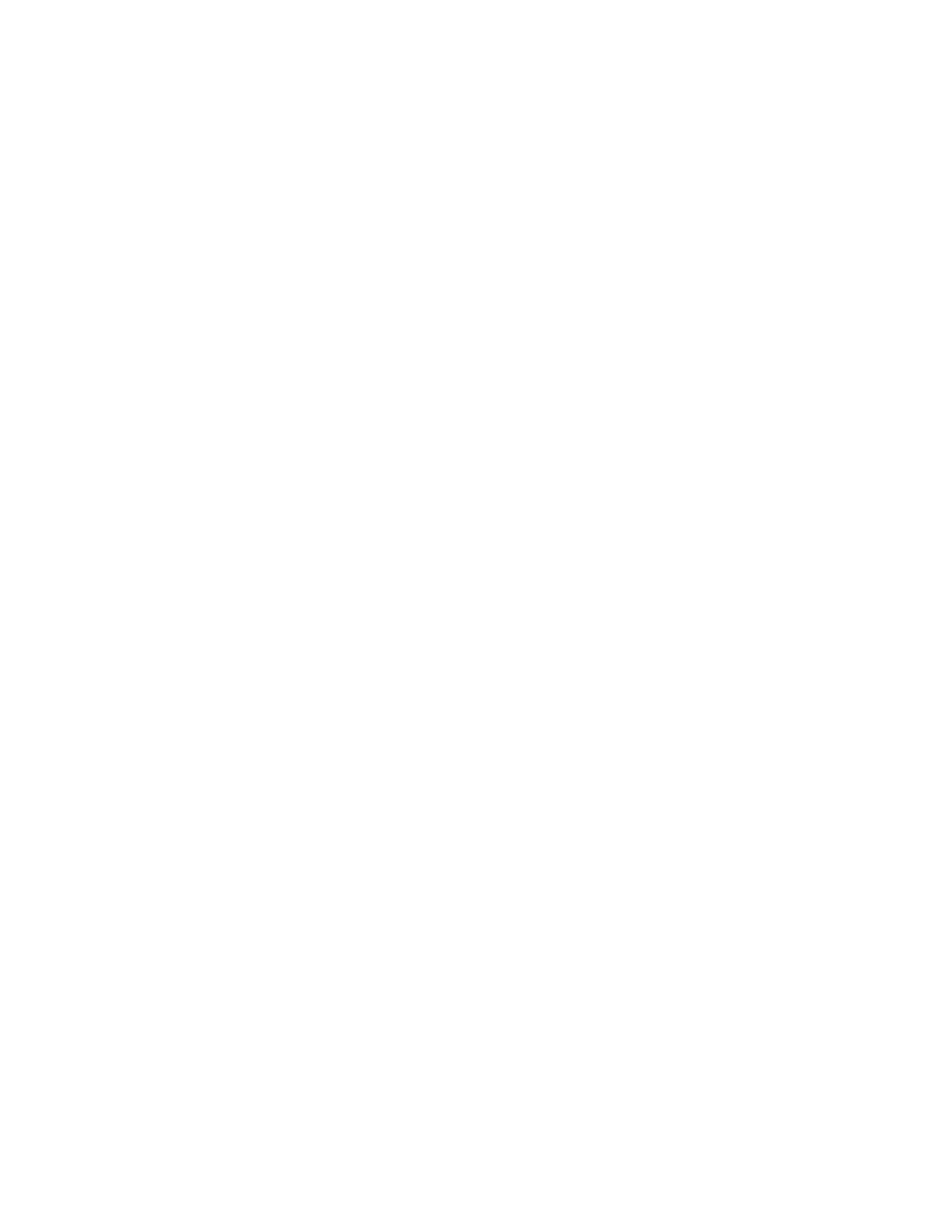}}
\subfigure[Second level]{\includegraphics{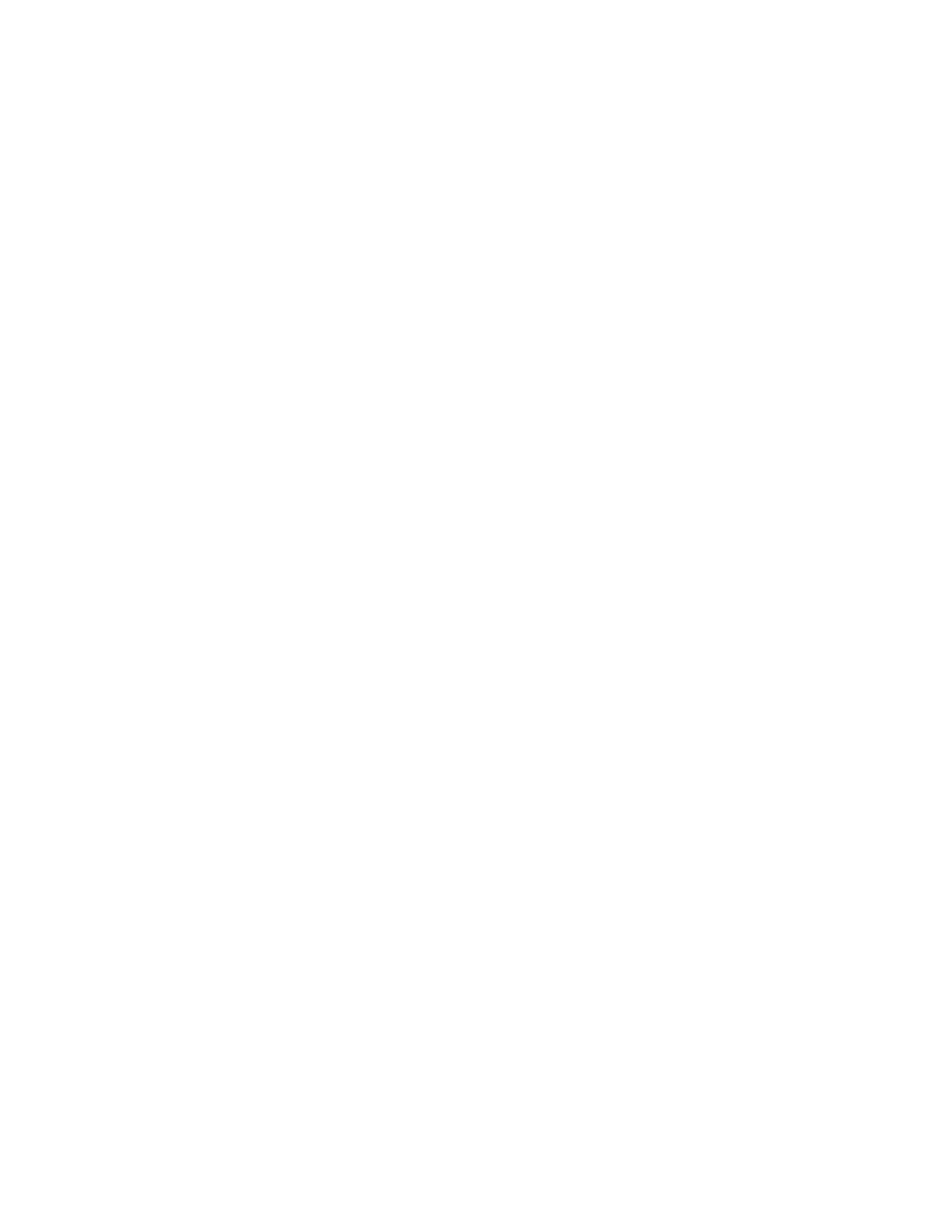}}
\caption{Image decomposition scheme according to the standard \mbox{2D} wavelet decomposition.}
\label{table-standard-decomposition}
\end{figure}

\begin{figure}
\centering

\subfigure[First level]{\includegraphics{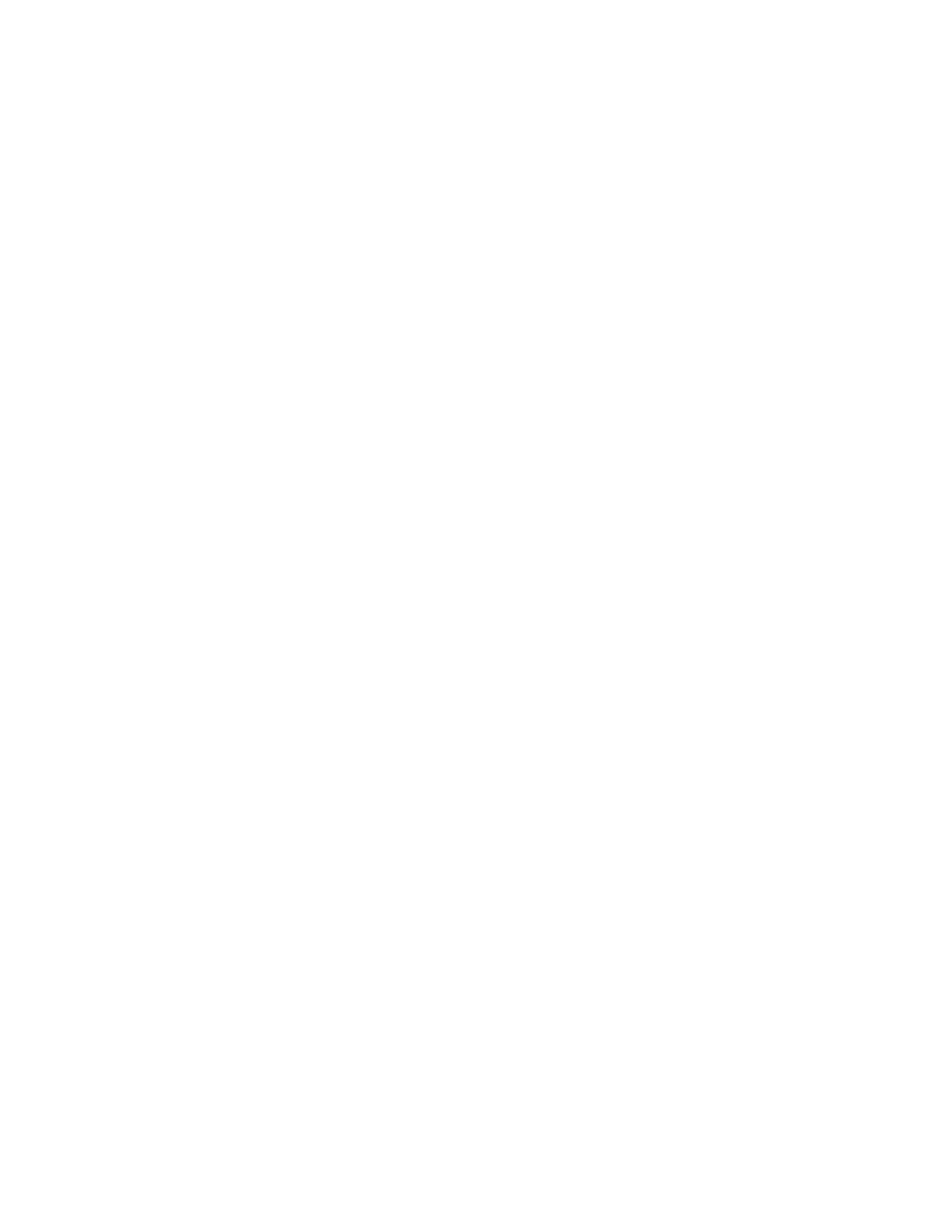}}
\subfigure[Second level]{\includegraphics{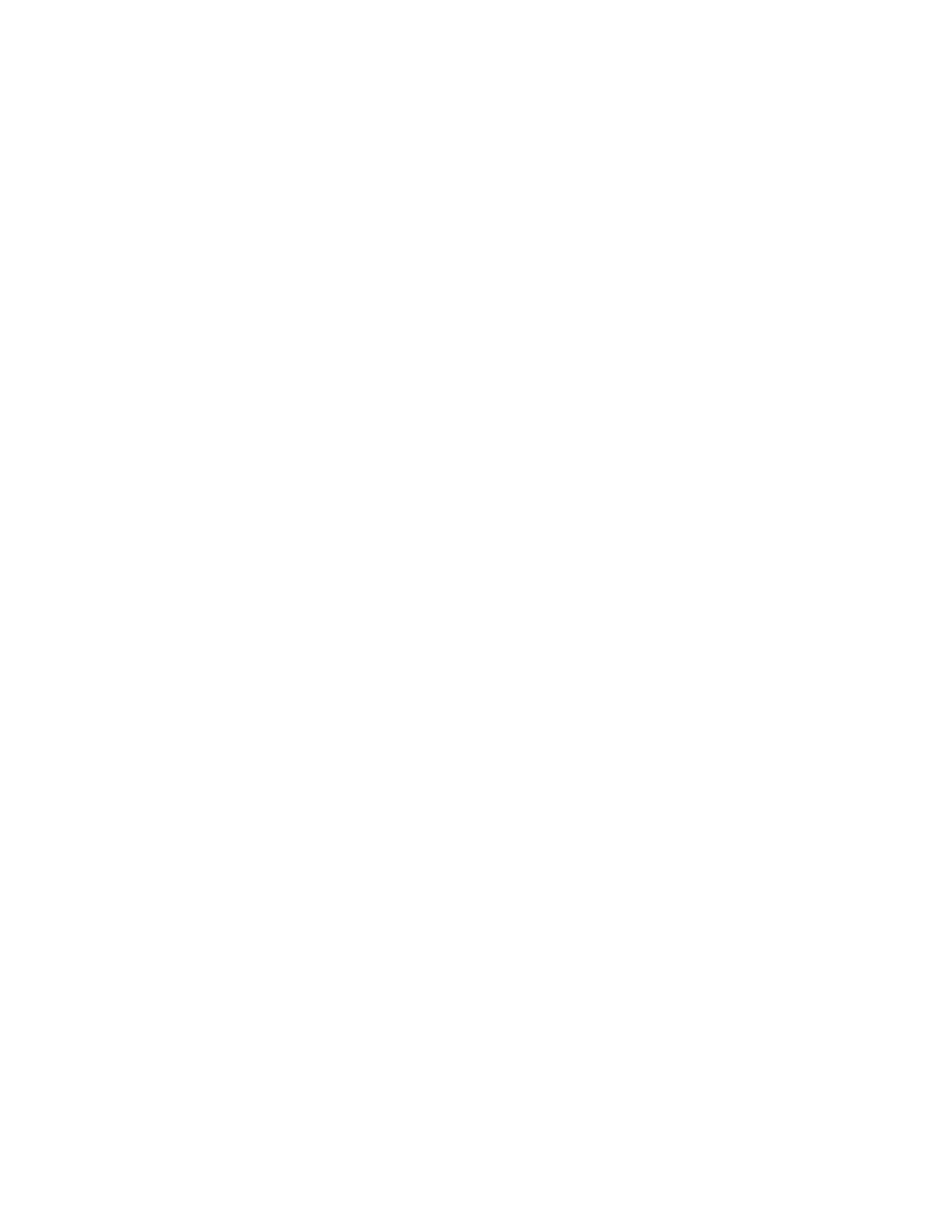}}
\caption{Image decomposition scheme according to the 2D orbital wavelet decomposition.
A two-level wavelet-orbital decomposition.}
\label{table-orbital-decomposition}
\end{figure}

Considered the symlet wavelet of order~4 (\texttt{symlet4})~\cite{Daub92},
we applied
the proposed decomposition
to the standard image \texttt{woman}~\cite{Misiti2002};
the resulting sub-images are shown
in Figure~\ref{fig:woman}.
For a qualitative comparison,
we included the sub-images obtained from the usual wavelet decomposition.
The computation was performed
in the Matlab environement~\cite{Misiti2002}.
It is worth nothing that the subtraction of the images
resulting from
$\psi_\text{LH}(x,y)$
and
$\psi_\text{HL}(x,y)$
results
in the image obtained by the wave function
in~\eqref{equation-orbital}.

\begin{figure}
\centering

\subfigure[standard wavelet decomposition]
{
\begin{tabular}{cc}
\includegraphics[width=0.2\columnwidth]{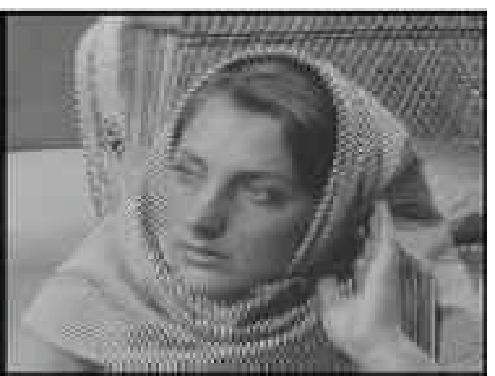}
&
\includegraphics[width=0.2\columnwidth]{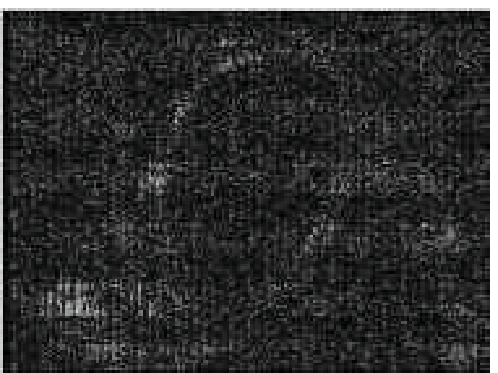}
\\
\includegraphics[width=0.2\columnwidth]{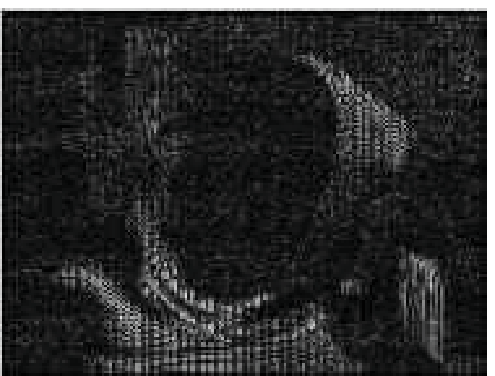}
&
\includegraphics[width=0.2\columnwidth]{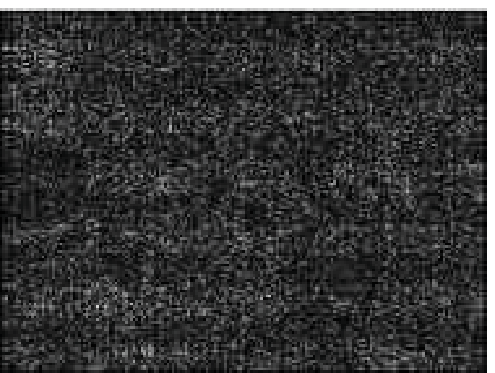}
\end{tabular}
}
\subfigure[Proposed decomposition]
{
\begin{tabular}{cc}
\includegraphics[width=0.2\columnwidth]{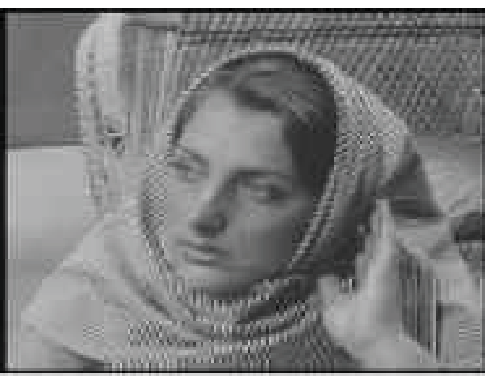}
&
\includegraphics[width=0.2\columnwidth]{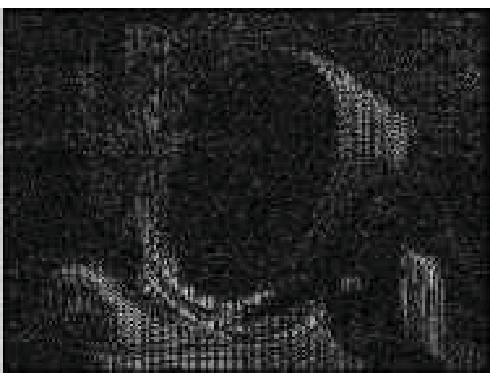}
\\
\includegraphics[width=0.2\columnwidth]{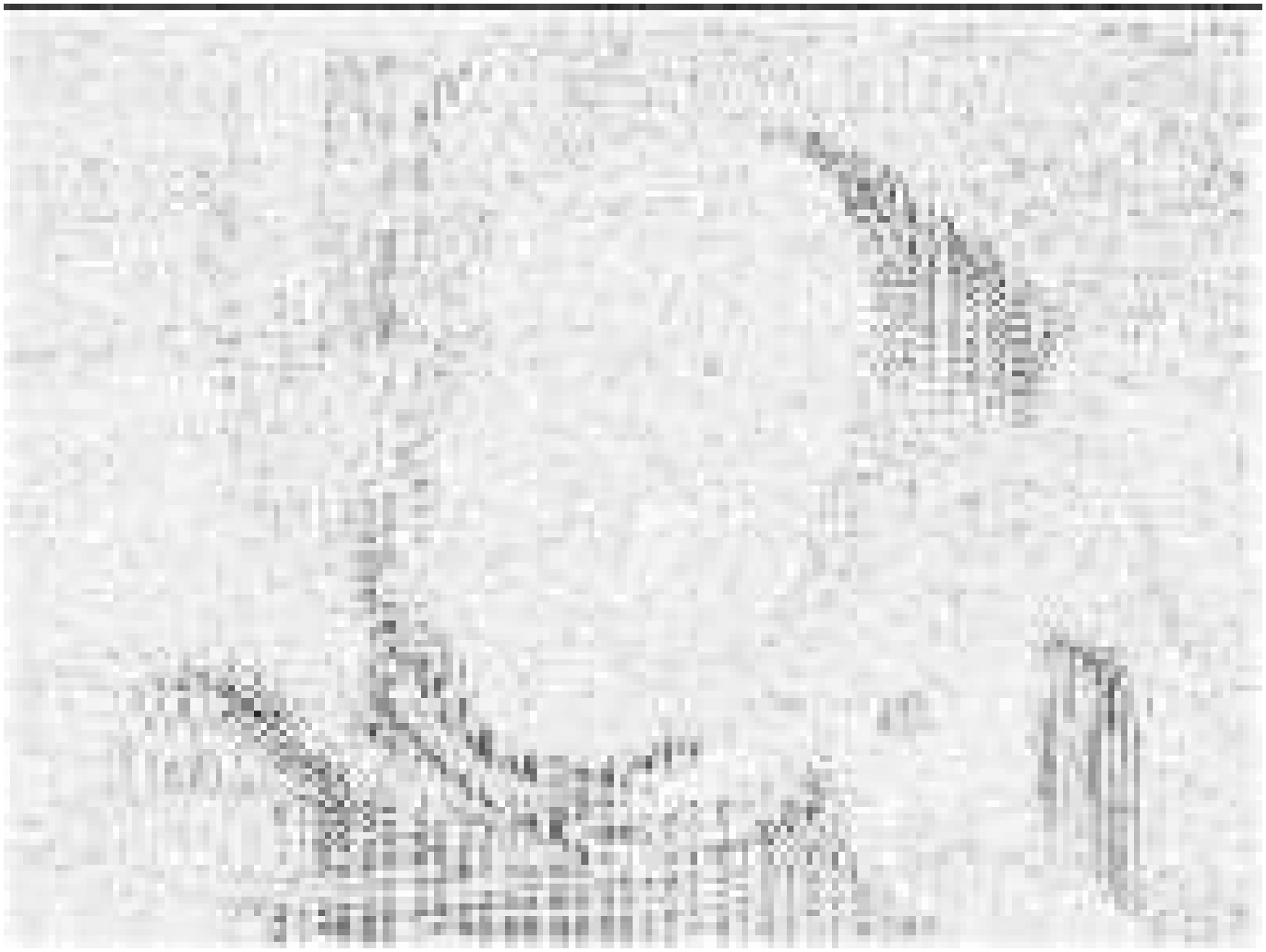}
&
\includegraphics[width=0.2\columnwidth]{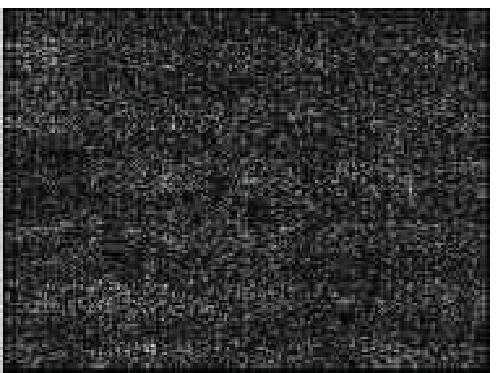}
\end{tabular}
}

\vspace*{8pt}

\caption{First-level decomposition woman image using \texttt{symlet4}
wavelet as defined in Matlab
according to
(a)~the standard wavelet decomposition
and
(b)~the proposed decomposition.}
\label{fig:woman}
\end{figure}

\section{Concluding Remarks and Future Work}
\label{sec:concluding}

This paper offers an alternative
approach for image
decomposition engendered by asymmetric orthogonal wavelets
whose definition
is inspired by the wave function theory
from particle physics.
Despite the focus being essentially on still image analysis,
the proposed
approach allows a fully scalable multimedia decomposition.
It remains
to be investigated the potential of this methodology
for
image compressing~\cite{DeVore95},
in 3D processing and scalable coding for multimedia
schemes~\cite{Ohm2004}.
Applications in other scenarios such as
wavelet-based watermarking~\cite{JHU2002}
or
steganography~\cite{Carrion2008}
also deserve an investigation.
To the best of our knowledge,
this is the first work to link
the exchanging formalism
from particle wave functions to wavelets analysis.

\section*{Acknowledgments}
The second and third authors
acknowledge
the partial support
from the Brazilian agencies
CAPES and CNPq,
respectively.

%\onecolumn
%

%

%
%
%
%
%
%

{\small
\singlespacing
\bibliographystyle{siam}
\bibliography{ref,ref-rev-1}
}

\end{document}